\theoremstyle{plain}
  \theoremstyle{plain}
  \theoremstyle{definition}
  \theoremstyle{plain}
  \newtheorem{cor}{Corollary}
  \theoremstyle{plain}
\newtheorem{prop}{Proposition}
\theoremstyle{plain}
\begin{document}
%
\title{Fragmentation of Distributed Exchanges}

\author{\IEEEauthorblockN{Marius Zoican}
\IEEEauthorblockA{University of Toronto Mississauga \\
Rotman School of Management \\
Toronto, ON, Canada \\
Email: \url{marius.zoican@utoronto.ca}}
\and
\IEEEauthorblockN{Sorin Zoican}
\IEEEauthorblockA{Politehnica University of Bucharest \\
Department of Telecommunications \\
Bucharest, Romania \\
Email: \url{sorin@elcom.pub.ro}}
}


%


\maketitle

\begin{abstract}
Distributed securities exchanges may become \emph{de facto} fragmented if they span geographical regions with asymmetric computer infrastructure. First, we build an economic model of a decentralized exchange with two miner clusters, standing in for compact areas of economic activity (e.g., cities). ``Local'' miners in the area with relatively higher trading activity only join a decentralized exchange if they enjoy a large speed advantage over ``long-distance'' competitors. This is due to a transfer of economic value across miners, specifically from high- to low-activity clusters. Second, we estimate the speed advantage of ``local'' over ``long-distance'' miners in a series of Monte Carlo experiments over a two-cluster, unstructured peer-to-peer network simulated in \texttt{C}. We find that the speed advantage increases in the level of infrastructure asymmetry between clusters. Cross-region DEX blockchains are feasible as long as the asymmetry levels in trading activity and infrastructure availability across regions are positively correlated.
\end{abstract}

\begin{IEEEkeywords}
distributed exchange, fragmentation, P2P network, Monte Carlo simulation, financial markets
\end{IEEEkeywords}


%
\IEEEpeerreviewmaketitle

\section{Introduction}
In today's fragmented equity markets, the same security can be traded simultaneously on multiple exchanges. For example, in the United States, no fewer than 13 trading venues compete to offer trading services \cite{ComertonForde2019}.

At a first glance, trading fragmentation is socially sub-optimal \cite{Pagano1989}. Exchanges are natural monopolies: traders naturally gravitate towards ``thick'' marketplaces to improve their chances of finding a counterparty. On the one hand, buyers and sellers are more likely to be matched if they submit orders to the same exchange. On the other hand, a monopolist exchange can charge high fees and commissions, potentially distorting incentives to trade. 

The emergence of electronic trading opened the opportunity for low-cost market fragmentation. Trading algorithms can (quasi-) simultaneously monitor available prices on multiple exchanges and automatically choose the best one within milliseconds. As a result, technology eroded the natural monopoly position of exchanges. Spurred on by the shift in the technology frontier, two regulatory reforms, RegNMS in the United States (2005) and Mifid in Europe (2007) directly mandated competition between platforms, setting the stage for today's fragmented trading environment.  

In theory, \emph{distributed exchanges} (DEX) as proposed in \cite{AuroraLabs2017IDEXExchange} or \cite{Warren20170x:Blockchain} can achieve the best of both worlds: a centralized marketplace where the infrastructure provider lacks monopoly power. In a DEX, computers supply trading infrastructure competitively by ``mining'' orders on a distributed ledger, e.g., a Blockchain. We refer to such computers as miners throughout the rest of the paper. In contrast to a traditional exchange which can unilaterally set trading fees, miners strategically interact with each other: That is, charging a lower fee allows a miner to steer order flow away from competitors. At the same time, if miners post orders to a unique order book within a distributed smart contract, a centralized market-place emerges where all buyers and all sellers participate to the same network. 

This papers' contribution is to examine whether such a centralized, competitive marketplace can emerge in a distributed exchange setting. Section \ref{sec:econ} builds a simple economic model. We show that if trading interest is unevenly distributed across regions, a single DEX implies a transfer of economic value from high-volume to low-volume regions. A single DEX emerges only if local miners have a large enough processing speed advantage over long-distance miners, relative to any economies of scope from managing a unique blockchain rather than multiple fragmented ones. Further, miners' incentives to become faster act as a centripetal force generating fragmentation of distributed exchanges: If miners use more computing power to be locally faster, the network latency between regions becomes relatively more important. 

Having established that the ``local'' miners' speed advantage is a key factor for their incentives to set up a DEX,  we calibrate the parameter in Section \ref{sec:CPU}. To this end, we set up Monte Carlo simulation experiments over an unstructured, two cluster, peer-to-peer network. Further, we analize the sensitivity of local miners' speed advantage to factors such as the distance between regions and the size asymmetry between clusters. We find that local miners' advantage is higher if regions are farther apart geographically and if computer clusters are of very different sizes. 

Several papers study distributed exchanges, and blockchain financial markets, from an academic perspective. Closest to our setting, \cite{BrolleyZoican2019} argue that introducing a DEX could reduce the cost of idle trading infrastructure by generating a real-time price for trading speed. Using blockchain data, \cite{FlashBoys20} document that traders on decentralized exchanges engage in bidding wars on fees to obtain time priority. To prevent front-running on decentralized exchanges, \cite{Aune2017} proposes an encryption algorithm for orders themselves. \cite{Easley2019} build a game-theoretical model to examine the strategic interaction between miners and traders. 

There is an extensive literature in financial economics on the impact of market fragmentation. Studying U.S. markets, \cite{OHaraYe2011} find that fragmentation is associated with higher price volatility and greater market efficiency. The existence of multiple time-priority order queues enables queue-jumping by impatient traders, leading to an increase in overall market depth \cite{FoucaultMenkveld2008}. On the other hand, fragmentation can lead to market instability, especially in an environment dominated by algorithmic trading  \cite{MenkveldYueshen2019}.

\section{A simple economic framework \label{sec:econ}}
We build a simple economic framework to model the miners' problem as a function of messaging speed over the Blockchain network.

Consider a two-period economy where miners and traders are clustered in two geographical areas, \textbf{A} (large) and \textbf{B} (small). We interpret the areas as two large cities or metropolitan areas (e.g., akin to trading between New York an Chicago).  Each area hosts $N\geq 2$ miners, such that the trading infrastructure is symmetric. There is a unit mass of traders across clusters, not necessarily evenly distributed. Without loss of generality, a fraction $\beta\in\left[\frac{1}{2},1\right]$ of traders are located in area \textbf{A} and $1-\beta$ in area \textbf{B} respectively. 

We assume that miners receive an exogenous fee $f\geq 0$ for each order they are the first to process. Further, miners are able to process ``local'' transactions (i.e., same-city trades) faster than ``long-distance'' transaction. Particularly, processing times for local and long-distance transactions are exponentially distributed with rate $\lambda$ and $\pi\lambda$, respectively, where $\pi\leq 1$. From the properties of a Poisson process, it follows that the probability of a miner being first to process a local transaction (competing against both local and long-distance miners) is 
\begin{equation}
    \mathbb{P}\left(\text{win local trade}\right)=\frac{\lambda}{N\lambda+N\pi\lambda}=\frac{1}{N\left(1+\pi\right)}.
\end{equation}
Conversely, the probability of a miner to first process a long-distance transaction fee is
\begin{equation}
    \mathbb{P}\left(\text{win long-distance trade}\right)=\frac{\pi}{N\left(1+\pi\right)}.
\end{equation}

\subsection{Miner profits and local speed advantage}

The expected profit of a miner in area \textbf{A} is
\begin{equation}\label{eq:piL}
    \mathbb{E}\text{Profit}_\textbf{A}=\left[\beta \frac{1}{N\left(1+\pi\right)} + \left(1-\beta\right) \frac{\pi}{N\left(1+\pi\right)}\right] f.
\end{equation}
At the same time, the expected profit of a miner in area \textbf{B} is
\begin{equation}\label{eq:piS}
    \mathbb{E}\text{Profit}_\textbf{B}=\left[\beta \frac{\pi}{N\left(1+\pi\right)} + \left(1-\beta\right) \frac{1}{N\left(1+\pi\right)}\right] f.
\end{equation}

\begin{prop}
In a single-blockchain setup, miners in the large region \textbf{A} outearn miners in the small region \textbf{B} if 
\begin{enumerate}
    \item[(a)] traders tend to cluster in region \textbf{A} (if  $\beta>\frac{1}{2}$), and
    \item[(b)] local mining is more efficient than long-distance mining, that is $\pi<1$.
\end{enumerate}
\end{prop}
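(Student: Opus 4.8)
The plan is to compute the profit gap $\mathbb{E}\text{Profit}_\textbf{A}-\mathbb{E}\text{Profit}_\textbf{B}$ directly from \eqref{eq:piL} and \eqref{eq:piS} and show that it factors into a form whose sign is transparent. First I would subtract the two expressions, pulling out the common prefactor $\frac{f}{N(1+\pi)}$, which is strictly positive whenever $f>0$ (since $N\geq 2$ and $\pi\leq 1$ keep the denominator positive). This isolates the relevant quantity inside a single bracket and reduces the problem to a sign analysis of a polynomial in $\beta$ and $\pi$.

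Next I would simplify the bracketed difference. The region-$\textbf{A}$ terms contribute $\beta + (1-\beta)\pi$ and the region-$\textbf{B}$ terms $\beta\pi + (1-\beta)$, so their difference expands to $2\beta + \pi - 2\beta\pi - 1$. Grouping the $\beta$-terms and the $\pi$-terms separately yields $(2\beta-1)-\pi(2\beta-1)$, which factors cleanly. Thus
\begin{equation}
\mathbb{E}\text{Profit}_\textbf{A}-\mathbb{E}\text{Profit}_\textbf{B}=\frac{f}{N(1+\pi)}\,(2\beta-1)(1-\pi).
\end{equation}

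The sign of the gap is then governed entirely by the product $(2\beta-1)(1-\pi)$. Over the admissible ranges $\beta\in[\tfrac12,1]$ and $\pi\leq 1$, the factor $2\beta-1$ is nonnegative and strictly positive exactly when $\beta>\tfrac12$, while $1-\pi$ is nonnegative and strictly positive exactly when $\pi<1$. Hence conditions (a) and (b) together force the product positive, delivering $\mathbb{E}\text{Profit}_\textbf{A}>\mathbb{E}\text{Profit}_\textbf{B}$ as claimed.

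I do not anticipate a genuine obstacle: the argument is a one-line factorization followed by a sign check, and the only points worth flagging are that strictness requires $f>0$ (at $f=0$ both profits vanish and the inequality becomes an equality), and that the factorization in fact yields the converse as well—within the stated parameter domain, region $\textbf{A}$ outearns region $\textbf{B}$ \emph{if and only if} both (a) and (b) hold—so the proposition could equally be stated as an equivalence.
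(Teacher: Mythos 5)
Your proposal is correct and follows exactly the paper's own proof: subtract \eqref{eq:piS} from \eqref{eq:piL} and factor the difference as $\frac{(1-\pi)(2\beta-1)}{N(1+\pi)}f$, then read off the sign. The extra remarks about needing $f>0$ for strictness and the converse holding are sound but do not change the argument.
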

\begin{proof}
We compute the difference between \eqref{eq:piL} and \eqref{eq:piS},
\begin{equation}
    \mathbb{E}\text{Profit}_\textbf{A}-\mathbb{E}\text{Profit}_\textbf{B}=\frac{\left(1-\pi\right)\left(2\beta-1\right)}{N\left(1+\pi\right)}f \geq 0,
\end{equation}
as long as $\beta\geq\frac{1}{2}$ and $\pi\leq 1$.
\end{proof}

It turns out that profit asymmetry between miners in different regions is driven by both ex ante skewness in traders' geographical distribution \emph{and} by the speed advantage of local over long-distance miners. 

We benchmark the outcome against a fragmented blockchain setup. With fragmentation, miners have exclusive access to local traders. Therefore, profits in each area are proportional to the mass of traders, $\beta$ and $1-\beta$:
\begin{align}\label{eq:fragmented}
    \mathbb{E}\text{Profit}^\text{fragmented}_\textbf{A}&=\frac{\beta}{N}f \text{ and} \\
    \mathbb{E}\text{Profit}^\text{fragmented}_\textbf{B}&=\frac{1-\beta}{N}f.
\end{align}

Comparing equations \eqref{eq:piL}, \eqref{eq:piS}, and \eqref{eq:fragmented}, we conclude that a single blockchain always benefits miners in small areas and harms miners in large areas. In other words, there is a symmetric value transfer from regions with high trading volume to regions with little trading activity.

The key variable driving the magnitude of the transfer is the speed ratio between long-distance and local miners. If local miners have a large comparative advantage ($\pi \searrow 0$), markets are de facto fragmented with each set of miners processing local transactions only. Conversely, if local miners have little comparative advantage ($\pi \nearrow 1$), markets are consolidated and miners in both areas earn the same expected profit.

An important corollary is that a single blockchain dampens the effects of any initial asymmetry in regions' trading interest, inasmuch as miners have a positive probability of capturing long-distance trades.

\subsection{Endogenous adoption}

We introduce a cost for maintaining a blockchain, $\mathcal{C}\left(x\right)\geq 0$ -- where $x$ is the DEX trading volume. The cost function has the property that $\mathcal{C}\left(a x\right)< a \mathcal{C}\left(x\right)$: That is, it is more expensive to maintain two small DEX blockchains than a single DEX with double the transaction volume. Moving from a fragmented DEX to a single DEX generates cost savings for each miner, since
\begin{equation}\label{eq:cost}
    \text{Cost savings ($\Delta$)} = \frac{\mathcal{C}\left(x\right)}{N}-\frac{\mathcal{C}\left(2x\right)}{2N}\geq 0.
\end{equation}

Since moving from an unique to a fragmented DEX implies a zero-sum transfer between miner groups, a single DEX is socially optimal as it entails half the fixed cost of a fragmented exchange. 

Will a single blockchain be adopted in equilibrium? Our results suggest that miners in low-volume areas always prefer a unique blockchain to a fragmented market: both due to the positive rent transfer, and the cost savings. For miners in large-volume areas, a single blockchain is optimal if and only if the cost savings are large enough, or
\begin{align}
    \mathbb{E}\text{Profit}^\text{fragmented}_\textbf{A}-\mathbb{E}\text{Profit}_\textbf{A}& <\Delta \Longrightarrow \\
    \frac{\pi\left(2\beta-1\right)}{N\left(1+\pi\right)}& <\Delta.
\end{align}

\begin{cor}
\textbf{A}-area miners only adopt a single blockchain if they have a strong local-to-long-distance speed advantage, that is for $\pi\leq \bar{\pi}$ where
\begin{equation}
    \bar{\pi}\equiv \frac{N\Delta}{\left(2\beta-1\right)-N\Delta}.
\end{equation}
\end{cor}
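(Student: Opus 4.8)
The plan is to start from the region-\textbf{A} adoption condition derived immediately above the corollary, namely $\frac{\pi(2\beta-1)}{N(1+\pi)} < \Delta$, and to solve it explicitly for $\pi$ so as to read off the threshold $\bar{\pi}$. Since $N\geq 2$ and $\pi\geq 0$, the factor $N(1+\pi)$ is strictly positive, so I would first clear the denominator without disturbing the inequality, giving $\pi(2\beta-1) < \Delta N(1+\pi)$. Expanding the right-hand side and moving every term containing $\pi$ to the left then yields $\pi\left[(2\beta-1)-N\Delta\right] < N\Delta$, at which point $\pi$ has been isolated up to division by the bracketed coefficient.

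The step I expect to be the main obstacle is precisely this final division, because the direction of the resulting inequality hinges on the sign of $(2\beta-1)-N\Delta$, which is not fixed a priori. In the economically relevant regime where region \textbf{A} genuinely stands to lose rents from consolidation, the asymmetry in trading interest dominates the per-miner cost savings, so $(2\beta-1) > N\Delta$ and the coefficient is positive; dividing then preserves the inequality and delivers exactly $\pi < \frac{N\Delta}{(2\beta-1)-N\Delta} = \bar{\pi}$, matching the claimed threshold.

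Finally I would dispatch the complementary case to confirm that $\bar{\pi}$ is the right object to report. If $(2\beta-1)\leq N\Delta$, the bracketed coefficient is nonpositive, so the left-hand side $\pi\left[(2\beta-1)-N\Delta\right]$ is at most zero while $N\Delta > 0$; the inequality then holds for every admissible $\pi$, region-\textbf{A} miners adopt unconditionally, and the formal threshold $\bar{\pi}$ becomes non-binding (it is negative, or blows up as the denominator vanishes). This shows that $\bar{\pi}$ constrains adoption precisely when cost savings are small relative to the trading-interest skew, which is the substantive content of the corollary. The one remaining wrinkle is that the underlying cost comparison is stated strictly, so the derivation naturally produces $\pi < \bar{\pi}$; the weak inequality $\pi \leq \bar{\pi}$ in the statement simply absorbs the boundary point $\pi = \bar{\pi}$, at which an \textbf{A}-miner is exactly indifferent between the single and fragmented blockchains, and I would note this indifference convention to close the argument.
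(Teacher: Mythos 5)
Your proposal is correct and follows exactly the route the paper intends: the corollary is just the displayed adoption condition $\frac{\pi(2\beta-1)}{N(1+\pi)}<\Delta$ solved for $\pi$, using the positivity of $N(1+\pi)$ and of the coefficient $(2\beta-1)-N\Delta$ in the binding regime. Your extra remarks on the sign of that coefficient (the threshold being vacuous when $N\Delta\geq 2\beta-1$) and on the weak-versus-strict boundary convention are sensible refinements the paper leaves implicit, not a different argument.
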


Figure \ref{fig:theory} illustrates the miners' profits as a function of relative speed, in both single-blockchain and fragmented-blockchain market setups.

\begin{figure}[h]
\caption{\label{fig:theory} \textbf{Miner profits and market structure}}
\begin{minipage}[t]{1\columnwidth}%
\footnotesize
Parameter values: $\beta=0.75$, $N=4$, and $f=1$.
\end{minipage}
\vspace{0.1in}
\begin{centering}
\includegraphics[width=\columnwidth]{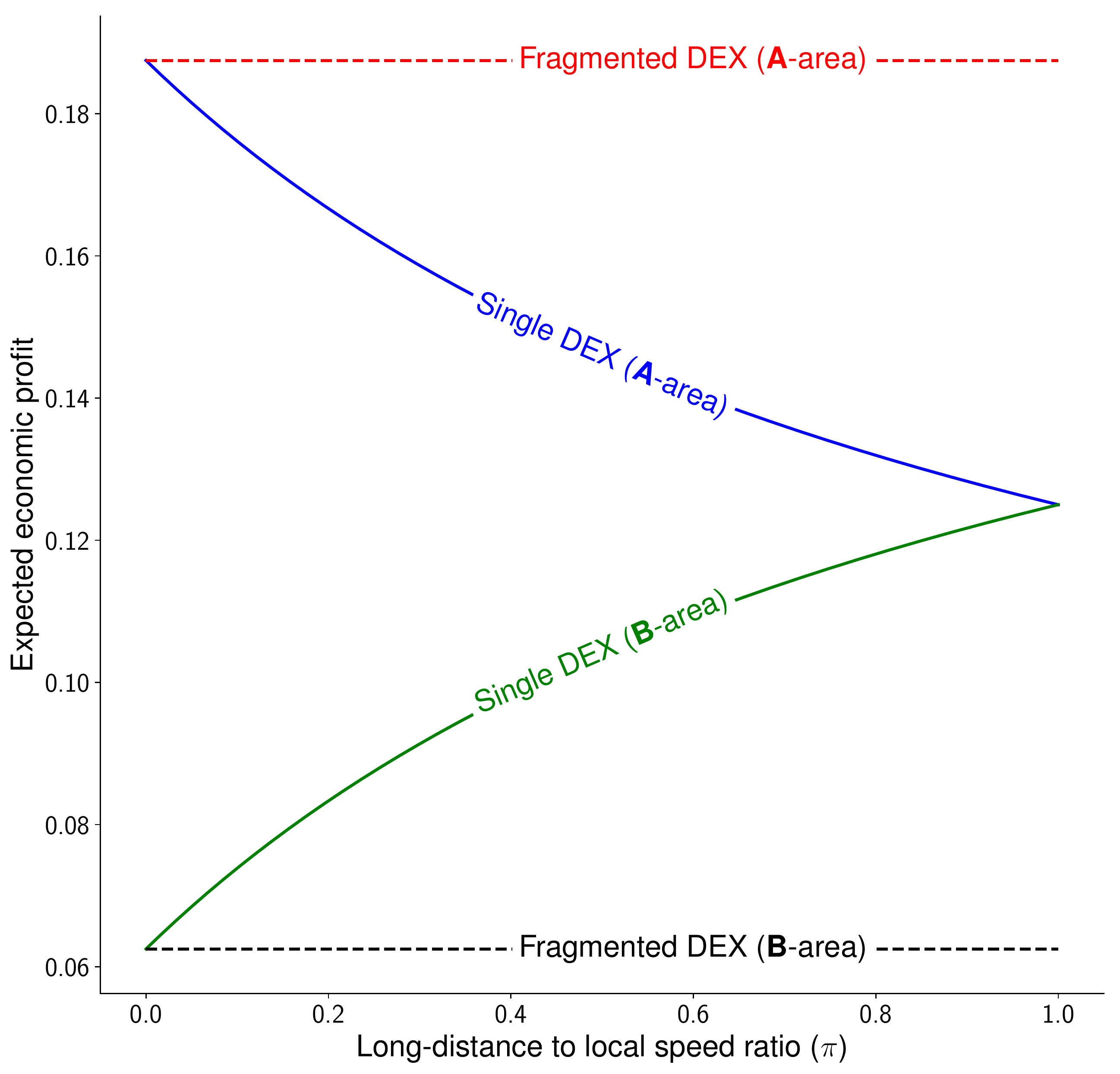}
\par\end{centering}
\end{figure}

Finally, investments in speed can increase fragmentation by boosting the local-trade advantage. Miners have private incentives to increase $\lambda$ and increase their likelihood to capture order flow. If the difference between the expected local trade processing time ($\dfrac{1}{\lambda}$) and the expected long-distance processing time ($\dfrac{1}{\pi\lambda}$) is due to a fixed network latency $\theta$, we can compute $\pi$ (a speed ratio) as a function of network latency $\theta$ as follows:
\begin{equation}\label{eq:theta}
    \frac{1}{\lambda}+\theta = \frac{1}{\pi\lambda} \Rightarrow \pi = \frac{1}{1+\theta\lambda}.
\end{equation}
From equation \eqref{eq:theta}, $\pi$ decreases in $\theta$: if miners in both areas become faster at the same time, the network latency is relatively more important, leading to de facto market fragmentation.

Section \ref{sec:CPU} focuses on estimating $\pi$ as a function of the distance between areas and the asymmetry between the number of nodes in each cluster.

\section{Monte Carlo network experiment \label{sec:CPU}}

\subsection{CPU considerations and network topology \label{sec:topology}}

Transactions on distributed exchanges (seen as a peer-to-peer computer network) are registered in a distributed database (Blockchain) as in Figure \ref{fig:blockchain_alg}.

\begin{figure}[h]
\caption{\label{fig:blockchain_alg} \textbf{Blockchain update algorithm}}
\begin{minipage}[t]{1\columnwidth}%
\footnotesize
$^{(1)}$ set of transactions, hash of previous block, other information (e.g. nonce)
$^{(2)}$ this block may be a very time consuming (e.g for classical proof of work)
$^{(3)}$ a specific consensus algorithm is used and several rounds of messages are interchanged between all invited nodes
\newline
\end{minipage}
\vspace{0.1in}
\begin{centering}
\includegraphics[width=0.7\columnwidth]{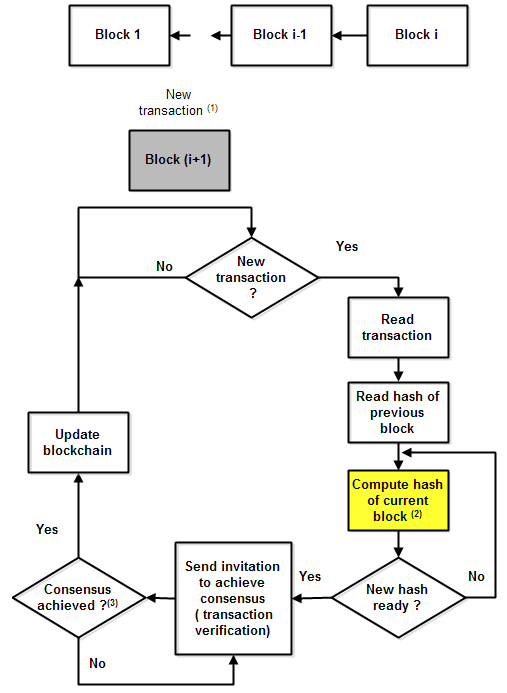}
\par\end{centering}
\end{figure}

Consensus on the distributed exchange is obtained by solving a cryptography (i.e., hashing) problem verified by a quorum with $N$ nodes, where $N$ is sufficiently large. All nodes in the quorum must communicate with each other, to prevent attacks from malicious nodes -- that is, to provide fault tolerance. The time to consensus has two components: the CPU computing time and the speed at which the messages are transferred to the network. The number of messages required depends on the specific consensus algorithm (e.g., $2N^2$ in practical byzantine fault tolerance algorithm or $kN\log_{k}N$ in scalable dynamic multi-agent  practical byzantine fault tolerance where the nodes communicates in groups of $k$ nodes \cite{Consensus_Msg}). 

To reach consensus, all nodes must communicate over several rounds to cross-verify information from other nodes. The consensus is consequently achieved by a qualified, rather than simple, majority of nodes. That is, if we assume the network has $f$ faulty or malicious nodes which provide wrong information, such that $N=3f+1$, consensus emerges when a minimum of $2f+1 > 0.5 N$ nodes validate the transaction.


The verification and consensus achievement protocol, is illustrated in Figure \ref{fig:protocol}, has the following steps:
\begin{enumerate}
    \item Node $i$ (the trader) want to validate a transaction. He sends an \texttt{INVITATION} message to a set of nodes $\mathcal{J}=\left\{j \mid j \text{ can participate to consensus}\right\}$. 
    \item Invited nodes $j\in \mathcal{J}$ reply with an \texttt{ACKNOWLEDGE} message and then receive the transaction information for hash verification.
    \item Each node $j$ submits a \texttt{TRANSACTION OK} message when the hash verification in complete.
    \item A number of messages in several rounds will be exchanged between all nodes (accordingly with a specific consensus algorithm). When the consensus algorithm is complete the transaction is validated and the blockchain is updated for all nodes.
\end{enumerate}

\begin{figure}[h]
\caption{\label{fig:protocol} \textbf{Transaction verification and consensus achievement protocol}}
\begin{minipage}[t]{1\columnwidth}%
\footnotesize
Node $i$ initiates a transaction verification and node $j$ participates to consensus achievement. The messages exchanged between all nodes to achieve consensus are not shown. Nodes $i$ and $j$ may be in  area $A$ or area $B$.
\end{minipage}
\vspace{0.1in}
\begin{centering}
\includegraphics[width=0.7\columnwidth]{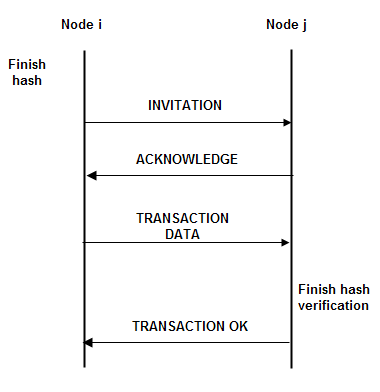}
\par\end{centering}
\end{figure}

The performance of whole system is given by its two components: the processor (i.e., the CPU computing power) and the network (i.e., its topology and the message delay between nodes). We turn next to each of the two components.

Let $\tau_k$ denote the random network delay of one such message, and $\xi_i$ denote the random processing time for miner $i$ to implement a general consensus algorithm (e.g., proof-of-work, proof-of-stake). The total processing time for miner $i$ in region $k$ can be written as the sum of random variables:
\begin{equation}
    T_{i,k}=\frac{1}{2}\tau_{i,k}+\xi_i+\frac{1}{2}\tau_{i,k}=\xi_i+\tau_{i,k}.
\end{equation}
The first miner to complete the algorithm, denoted by $i^{(1)}$, is the one to enter the transaction into the blockchain and collect the fee. Any other node will abort the process. 

In reality, both $\tau$ and $\xi$ are random variables. The randomness in the round-trip time $\tau$ is driven by parameters such as the distance between nodes, asymmetry between clusters, or message length, whereas randomness in $\xi$ emerges from the characteristics of the consensus algorithm. Throughout the rest of the paper, we consider $\xi$ is fixed and disregard variation in processing times. The assumption is motivated by our focus on the impact of geographical distance on miner profits. In particular, the implementation of the consensus algorithm does not depend on the distance from the trader node. Further, competitive pressure generates incentives for each miner to use the best CPU available on the market, which translates into very similar computing times across miners (that is, $\mathbb{E}\xi_i \approx \mathbb{E}\xi_j, \forall i\neq j$). Table \ref{tab:CPU} illustrates, as of October 2019, the highest-performance CPUs available on the retail market \cite{CPU_1, CPU_2, CPU_3, CPU_4, CPU_5}, sorted by their average benchmark score.

\begin{table}[h]
\caption{CPU performance \label{tab:CPU}}
\begin{tabular}{lllll}
\hline
CPU & Launch date & Score range & Average Score\\
\hline
Intel Core i9-9900KF & Q1 2019	&	80-110	& 	101\\
AMD Ryzen 7 3800X& Q3 2019 &	87-100	& 	95.9\\
Intel Core i7-9800X	 & Q4 2018	& 79-101	& 	89.9\\
AMD Ryzen 5 3600& Q3 2019 &  76-95	& 	88.6\\
Intel Core i5-9500& Q2 2019	& 80-92	& 	87.6\\
\hline
\end{tabular}
\end{table}

Further, the network topology features two node clusters, \textbf{A} and \textbf{B}, standing in for two different geographies (e.g., cities where trading activity concentrates). We allow the clusters to have a different number of nodes, to model asymmetry as in Section \ref{sec:econ}. Within each cluster, or area, nodes are connected by high speed links (low delay). Between clusters, nodes are connected by  low speed links (high delay). That is, the distance between areas \textbf{A} and \textbf{B} is modeled through a larger messaging delay. 

\begin{figure}[h]
\caption{\label{fig:net_model} \textbf{Network model}}
\begin{minipage}[t]{1\columnwidth}%
\footnotesize
Nodes are fully interconnected (not all the link are illustrated). Only nodes A0--A3 and B0--B2 participate to achieve consensus. Node A0 or node B0 will initiate transaction verification. 
\end{minipage}
\vspace{0.1in}
\begin{centering}
\includegraphics[width=\columnwidth]{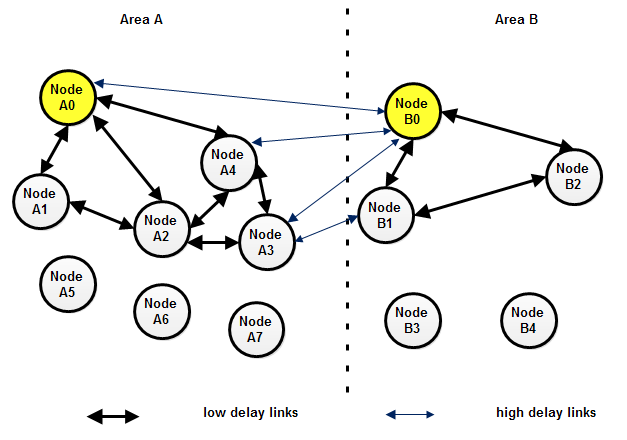}
\par\end{centering}
\end{figure}

Figure \ref{fig:net_model} illustrates such a network topology. The node initiating the transaction (and collecting the fee) can be located either in area \textbf{A} or in area \textbf{B} -- without loss of generality, we label it as $A_0$ or $B_0$. The Figures also shows network nodes that do not participate in the consensus quorum for the given transaction.

Throughout the rest of the paper, we consider a network with 10 nodes, divided in two clusters \textbf{A} and \textbf{B}. We consider different cluster configurations, indexed by $\left(\eta_A,\eta_B\right)$, where $\eta_A$ and $\eta_B$ are the number of nodes in clusters \textbf{A} and \textbf{B}, respectively, with $\eta_A+\eta_B=10$. Without loss of generality, we fix cluster $A$ to be the larger one and consider $\eta_A \supset \left\{5,6,7,8,9\right\}$. The case of $\eta_A=5$ corresponds to identically-sized clusters; the case of $\eta_A=10$ is ignored as it corresponsds to the degenerate case of a single cluster. Fast, intra-cluster, links have a latency between 10 and 30 milliseconds across all specifications. We consider six values for the average latency of a slow link, ranging from 50 to 300 milliseconds in 50 ms increments.

The round trip time (delay) distribution is estimated using a simulation program written in \texttt{C} and developed in the Visual Studio software. That is, we simulate communication in a  unstructured peer-to-peer network.  The simulator randomly selects a source and destination from the $N$ nodes in the quorum, then draws a random route between the source and the destination. The message delay is computed as a sum of delays between hops (i.e., intermediary nodes on the route route).  The communication delays between nodes are uniform random variables in a predefined interval different for fast links (inter cluster) and slow links (intra cluster).

\subsection{Estimation of delay distribution \label{sec:marginal}}

Let $F_\ell\left(\tau\right)$ be the cumulative distribution function of miner message delay $\tau$, where $F_\ell\left(x\right)\equiv\mathbb{P}\left(\tau\leq x\right)$. The parameter $\ell$ is a set of conditioning variables (distance, number of nodes active, etc.).

We approximate the (theoretical) cumulative distribution function $F\left(\cdot\right)$ with its empirical counterpart obtained through network simulations, that is 
\begin{equation}\label{eq:empCDF}
    \hat{F}_\ell\left(x\right)=\frac{\text{Number of observations smaller than } x}{\text{Number of observations } (N)}
\end{equation}
where $\hat{F}_\ell\left(x\right)$ converges to $F_\ell\left(x\right)$ if $N$ is arbitrarily large. For each parameter combination, we set $N=100,000$ (one hundred thousand simulations).

We estimate the empirical cumulative distribution for the entire network ($F^\text{full}_\ell\left(\cdot\right)$) as well as conditional on the source node being in cluster $A$ or $B$ ($F^\text{A}_\ell\left(\cdot\right)$ and $F^\text{B}_\ell\left(\cdot\right)$, respectively). Section \ref{sec:networkwide} focuses on network-wide delays, whereas in Section \ref{sec:winningprob} we conduct a Monte Carlo experiment to determine the relative network delays of nodes across clusters.

\subsection{Network-wide results \label{sec:networkwide}}

We estimate the average network delay to consensus using a standard bootstrap algorithm: from the full sample of 100,000 simulated delays we draw 1,000 sub-samples of 5,000 observations and subsequently average their means.  Figure \ref{fig:average_consensus} suggests that the average network delay to consensus following a transaction decreases both (a) in the latency of inter-cluster links and (b) in the cluster asymmetry level. 

The first result is intuitive: the closer the clusters are, geographically, the faster computers communicate across cluster. The second result is driven by the fact that, if clusters are more asymmetric, the route between the source and destination node is less likely to contain inter-cluster, high-delay hops and more likely to contain intra-cluster, low-latency hops. As a result, the average delay is lower in more asymmetric networks.


Interestingly, the variability of the delay decreases in the level of asymmetry. The result follows immediately from the higher delay variability of slow links, which are less likely to be used for consensus in an asymmetric network topology.

\begin{figure}[h]
\caption{\label{fig:average_consensus} \textbf{Average message delay (full network)}}
\begin{minipage}[t]{1\columnwidth}%
\footnotesize
This figure illustrates the mean message delay in a network with 10 computers. The cluster structure is on the x-axis, from an equal distribution of five computers per cluster $(5,5)$ to the most uneven distribution with nine computers in a single cluster $(9,1)$. Different lines correspond to different delays on low-speed links. Means are bootstrapped via 1,000 draws of 5,000 observations from the original population of 100,000 delays. 
\end{minipage}
\vspace{0.2in}
\begin{centering}
\includegraphics[width=\columnwidth]{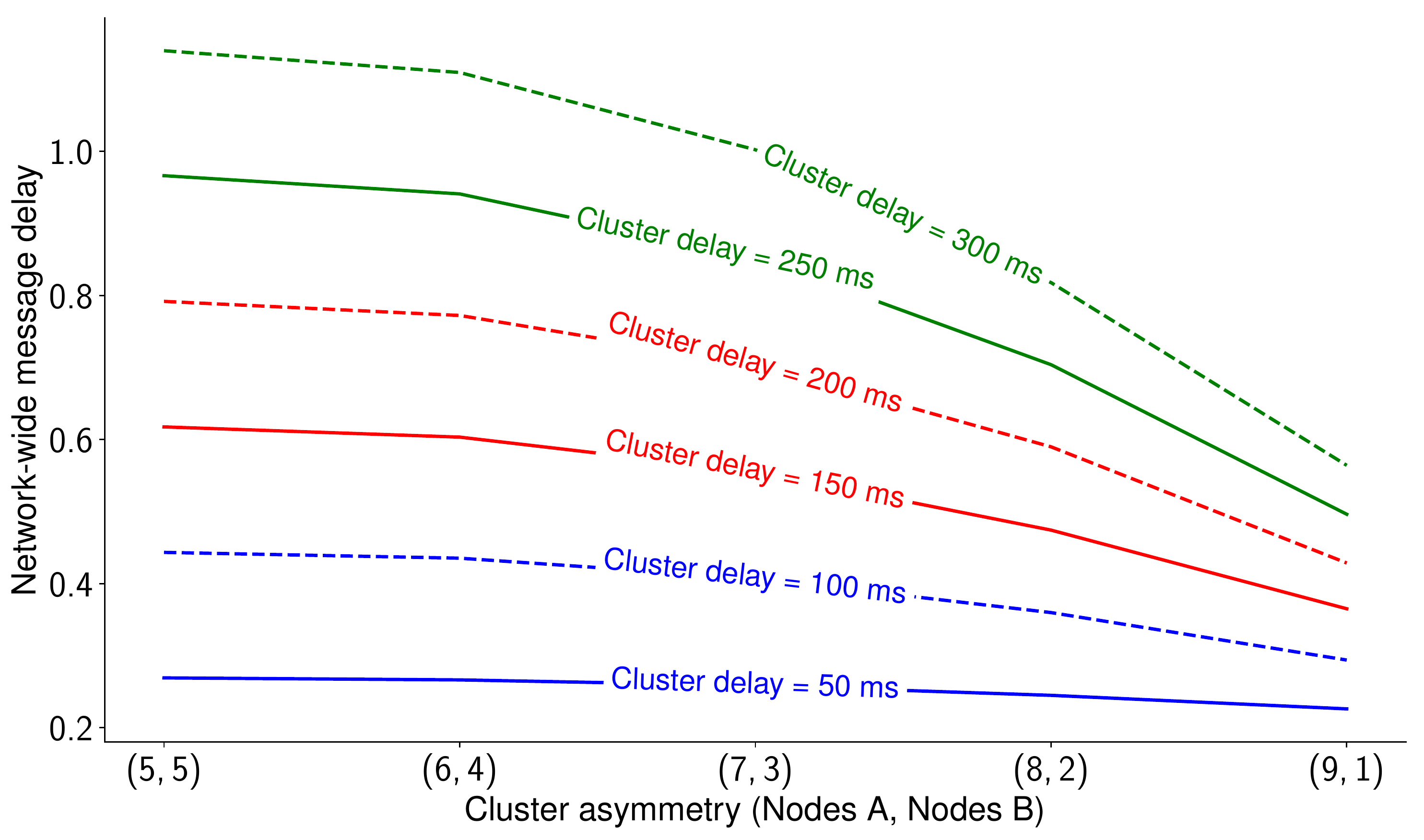}
\par\end{centering}
\end{figure}

Figure \ref{fig:prob_distribution} illustrates the density and cumulative distribution of delays, using the full samples of 100,000 observations per topology. We note that more symmetric network topology configurations exhibit ``fatter tails,'' that is a higher probability of extreme delays to consensus. The ``fat tail'' effect is even more pronounced for relatively slower inter-cluster links, as variability increases.

\begin{figure}[h]
\caption{\label{fig:prob_distribution} \textbf{Probability distribution of message delays}}
\begin{minipage}[t]{1\columnwidth}%
\footnotesize
This figure illustrates the empirical (simulated) distribution of the message delay in a network with 10 computers, for different cluster asymmetry levels and link speeds. We plot probability densities (left) and cumulative distribution functions (right) for a inter-cluster delay of 50 ms (top panel) and 300 ms respectively (bottom panel)
\end{minipage}
\vspace{0.2in}

\begin{centering}
\subfloat[Inter-cluster average delay is 50 ms.]{
\begin{centering}
\includegraphics[width=0.45\columnwidth]{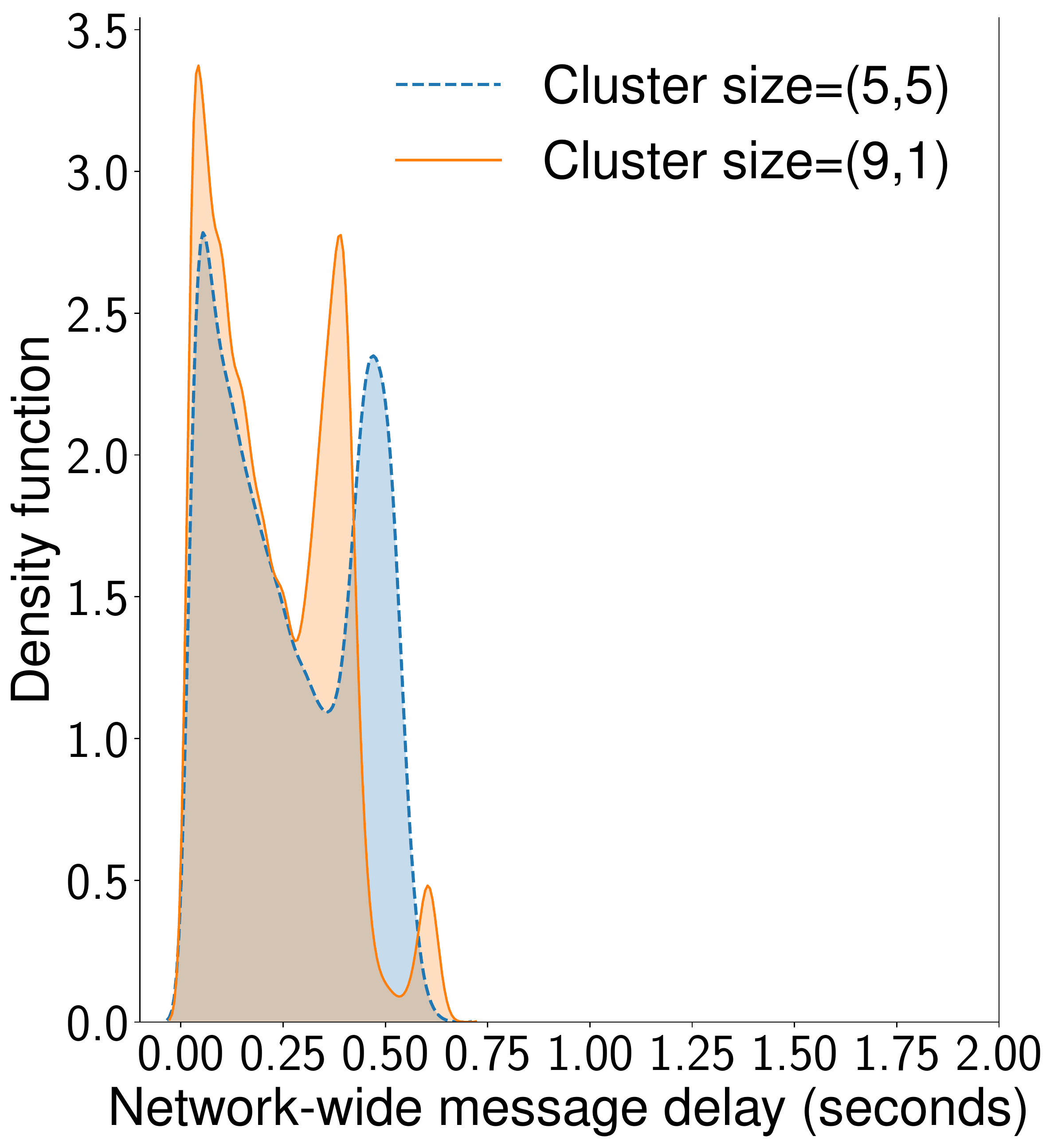}\includegraphics[width=0.45\columnwidth]{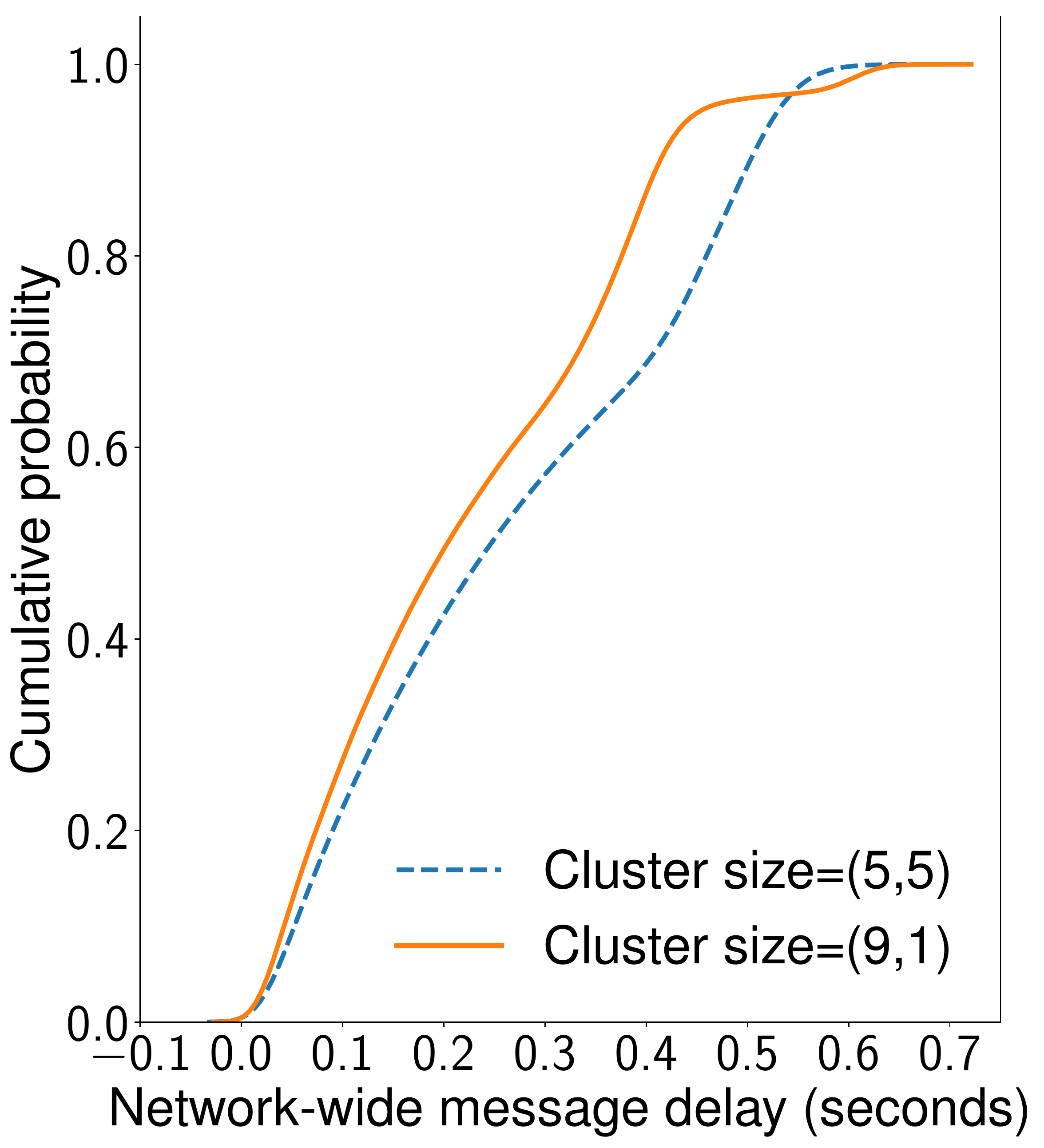}
\end{centering}}
\end{centering}

\begin{centering}
\subfloat[Inter-cluster average delay is 300 ms.]{
\begin{centering}
\includegraphics[width=0.45\columnwidth]{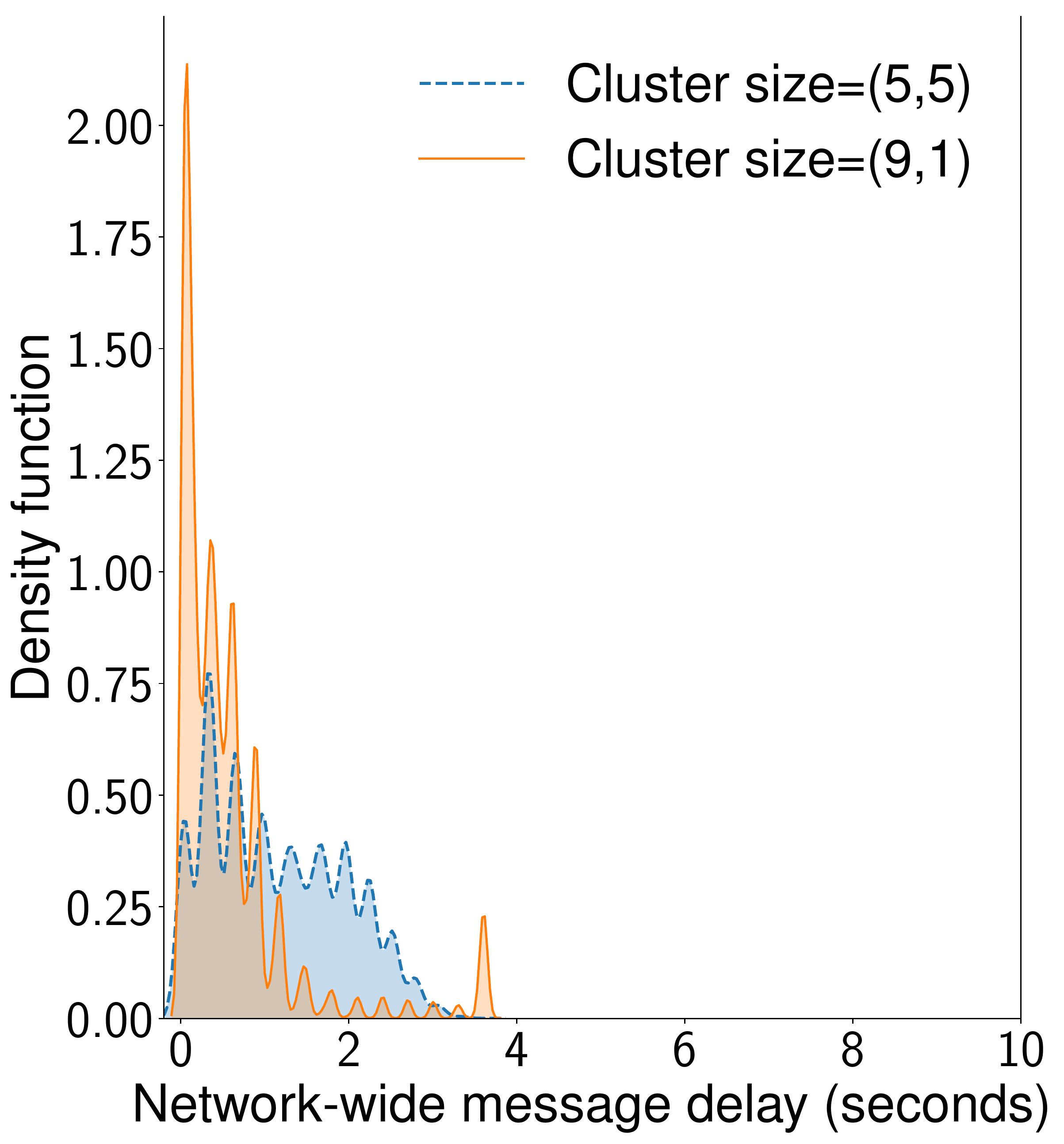}\includegraphics[width=0.45\columnwidth]{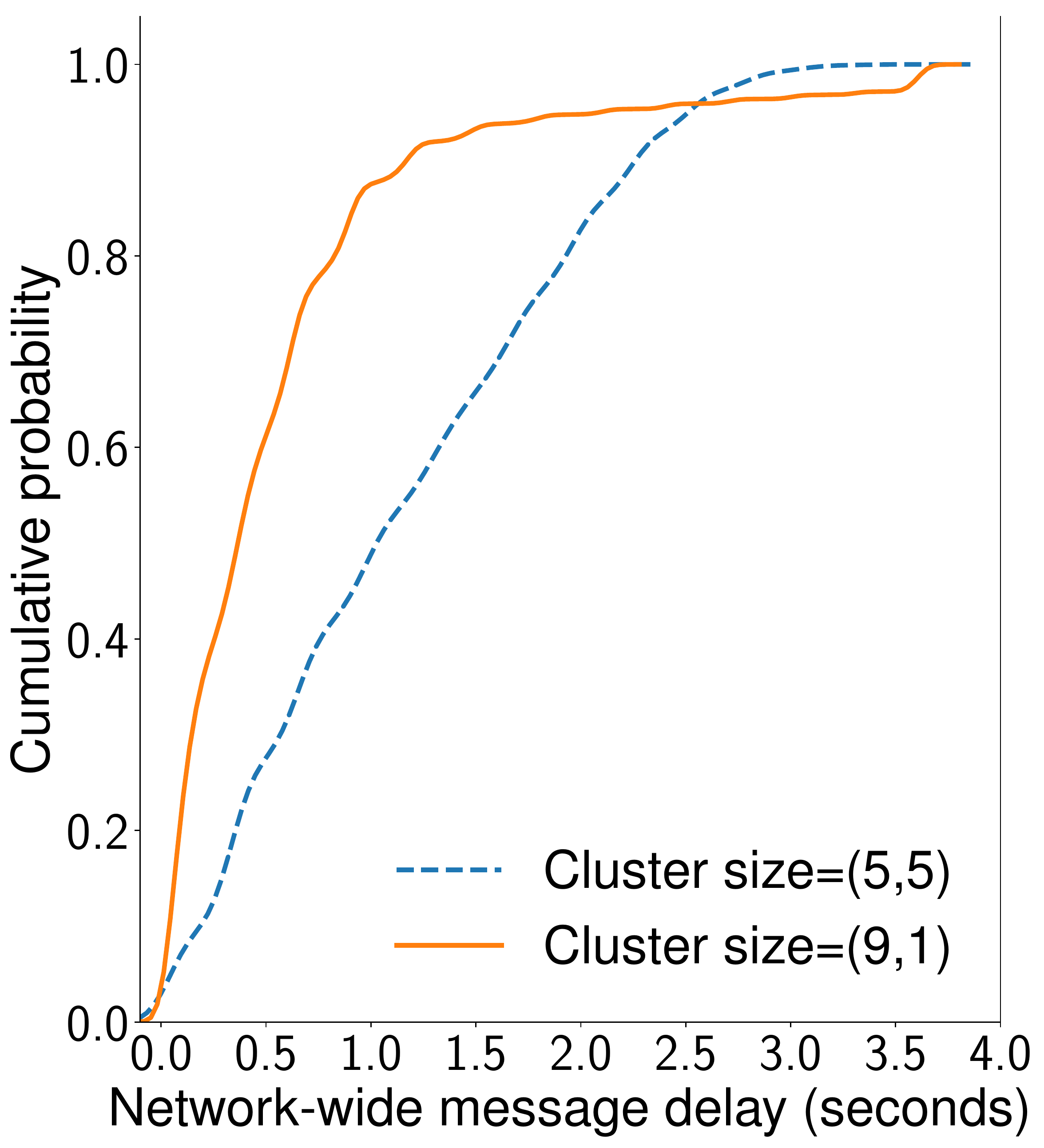}
\end{centering}}
\end{centering}

\end{figure}

\subsection{Large cluster speed advantage \label{sec:winningprob}}

In this section, we consider delays for nodes in clusters \textbf{A} and \textbf{B} separately. We implement a Monte Carlo experiment to estimate the relative probability of posting a transaction to the Blockchain for nodes in either cluster (i.e., $\pi$ in Section \ref{sec:econ}), as a function of (i) the asymmetry and (ii) the distance between clusters as defined in Section \ref{sec:topology}.

As a first step, we estimate 100,000 delays for nodes in each cluster and estimate the empirical distribution functions of the delay using equation \eqref{eq:empCDF}, that is $\hat{F}^A_{\ell,\left(\eta_A,\eta_B\right)}$ and $\hat{F}^B_{\ell,\left(\eta_A,\eta_B\right)}$. Figure \ref{fig:prob_distribution_nodes} displays the probability distributions of delays for both clusters, for different cluster asymmetry levels and inter-cluster delays. 

\begin{figure}[h]
\caption{\label{fig:prob_distribution_nodes} \textbf{Probability distribution of message delays in each cluster}}
\begin{minipage}[t]{1\columnwidth}%
\footnotesize
This figure illustrates the empirical (simulated) distribution of the message delay in a network with 10 computers, for each cluster $A$ and $B$. The inter-cluster delay is fixed to 300 ms. The top (bottom) panel displays probability distribution functions for low (high) asymmetry levels. 
\end{minipage}
\vspace{0.2in}

\begin{centering}
\subfloat[Asymmetry levels are: \emph{left} (5,5) and \emph{right} (6,4) ]{
\begin{centering}
\includegraphics[width=0.45\columnwidth]{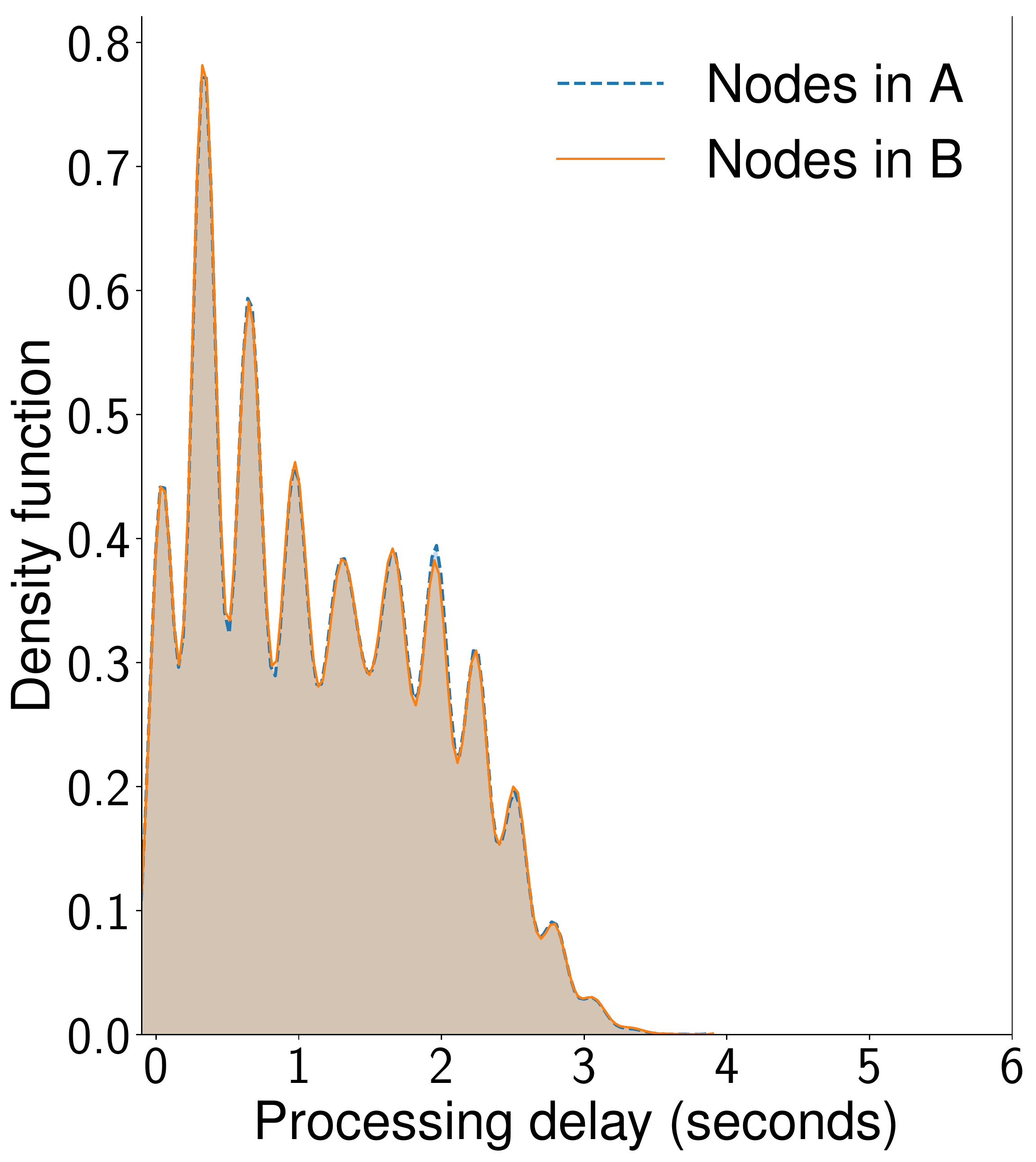}\includegraphics[width=0.45\columnwidth]{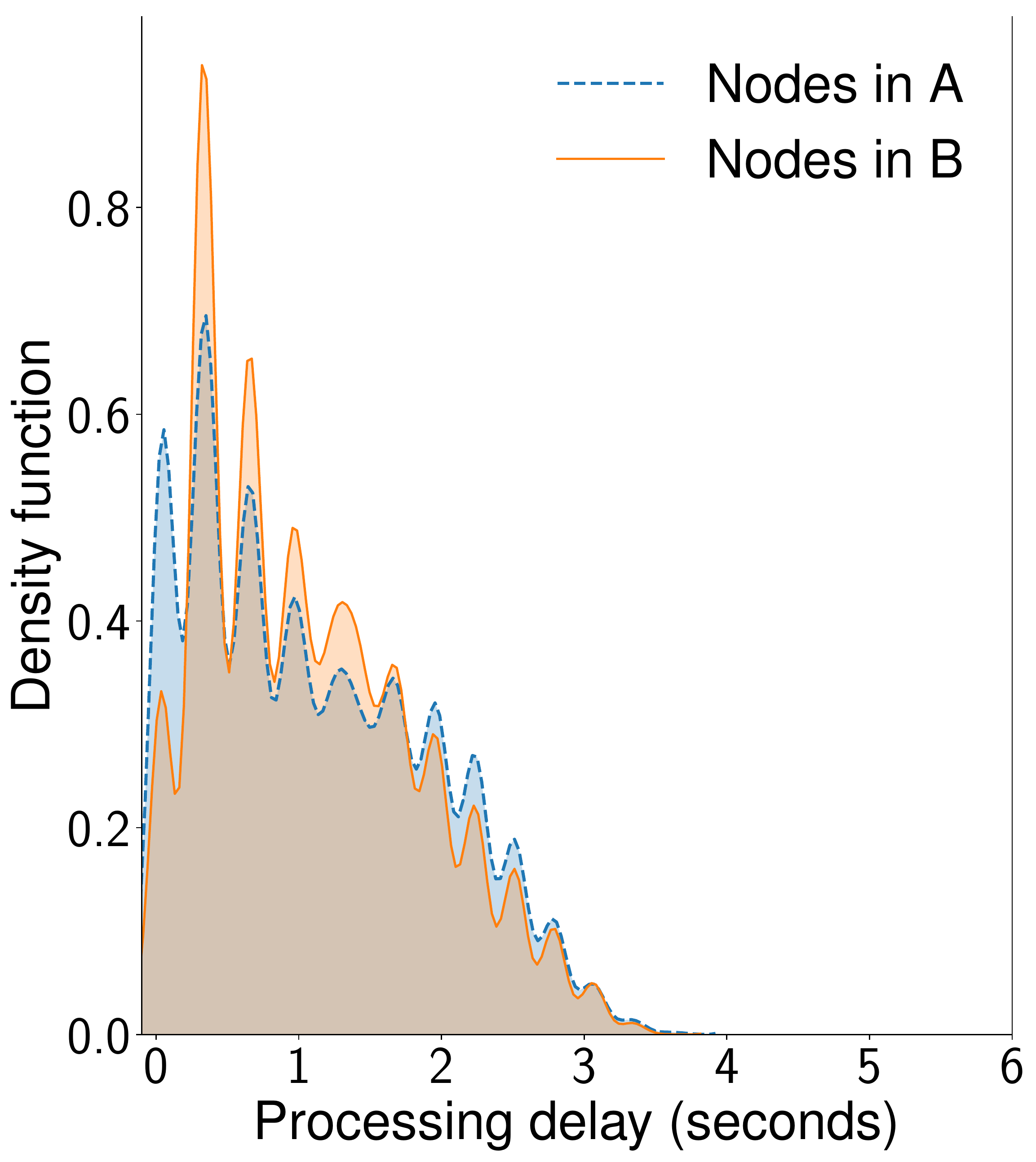}
\end{centering}}
\end{centering}

\begin{centering}
\subfloat[Asymmetry levels are: \emph{left} (8,2) and \emph{right} (9,1)]{
\begin{centering}
\includegraphics[width=0.45\columnwidth]{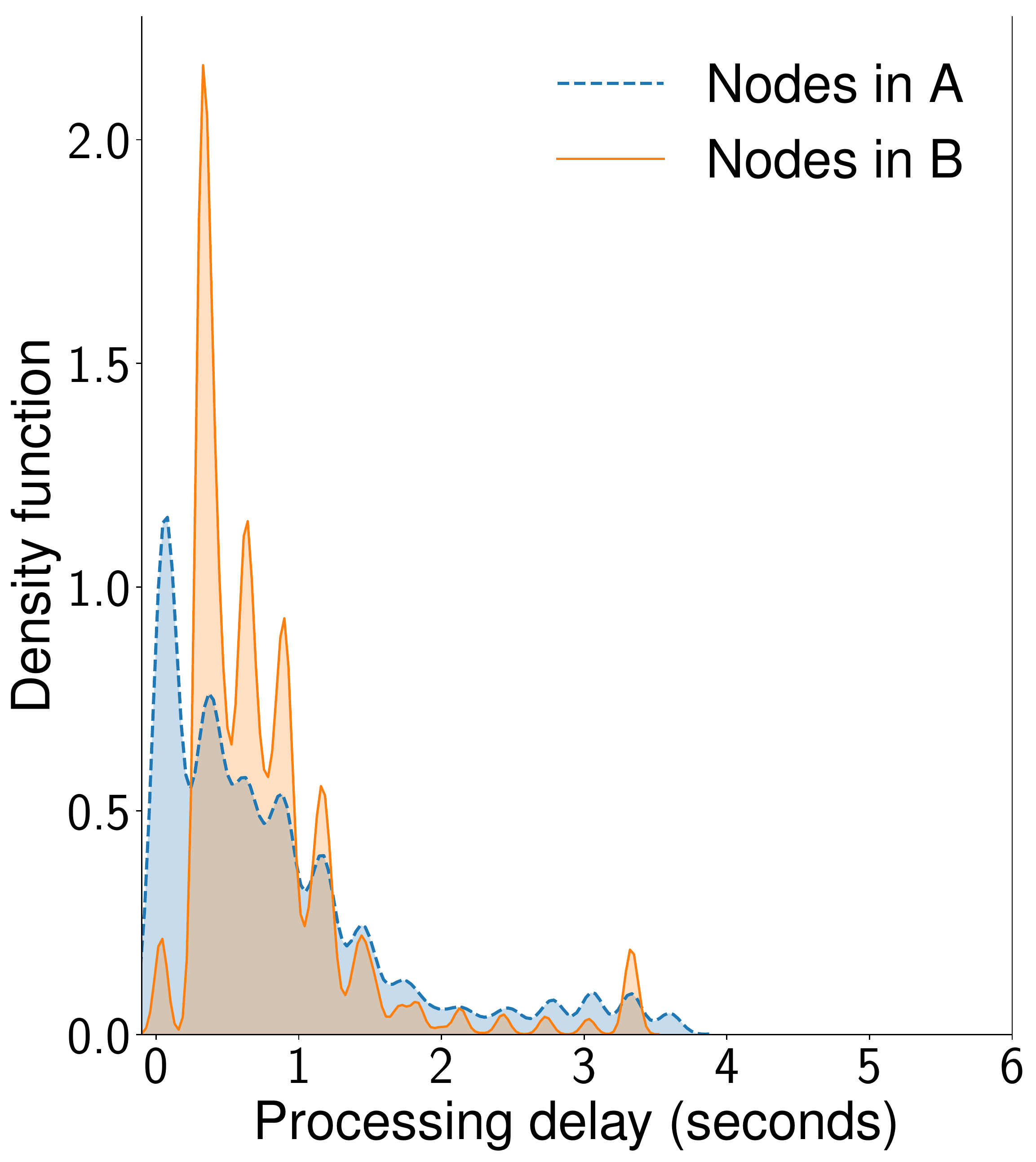}\includegraphics[width=0.45\columnwidth]{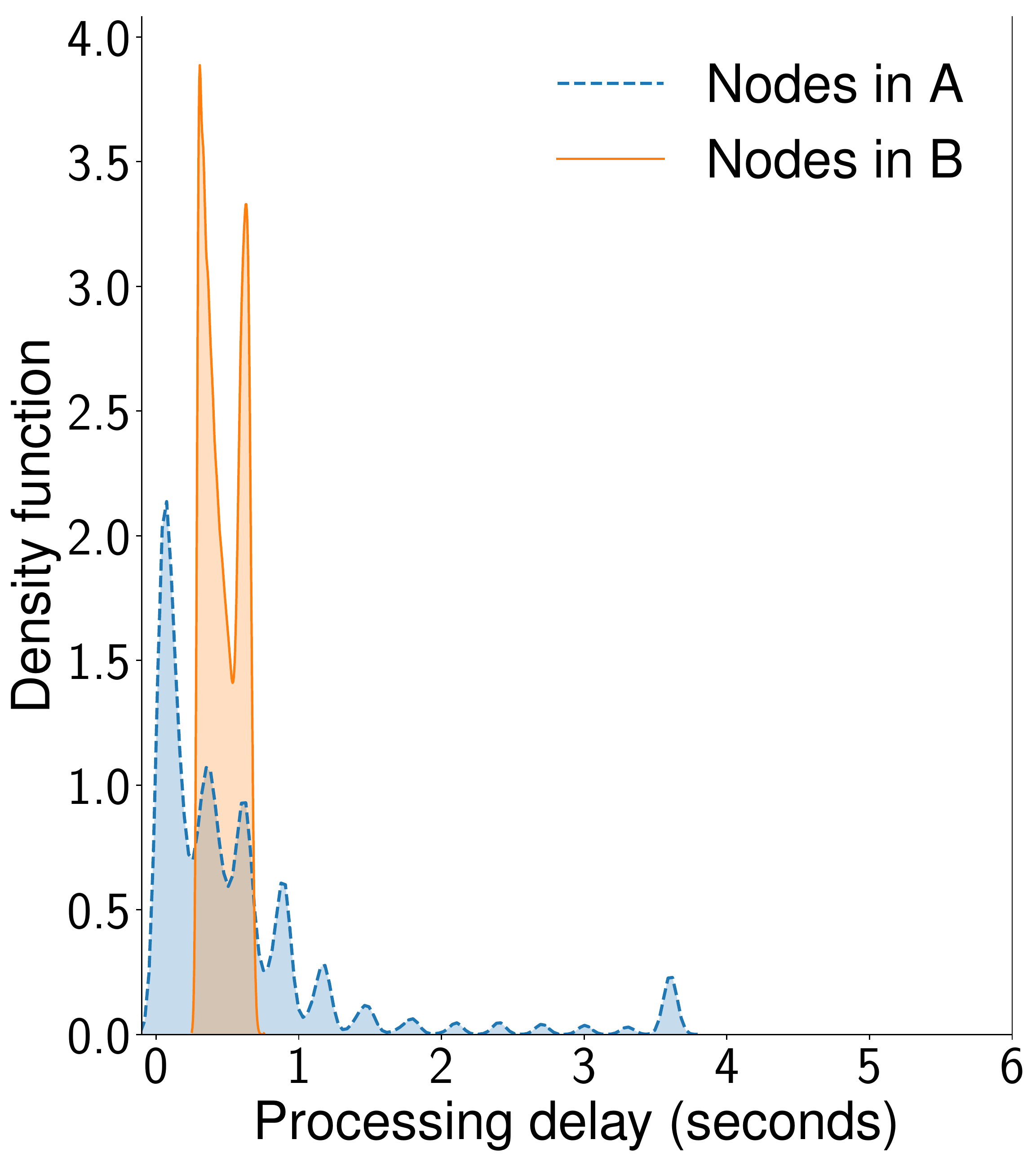}
\end{centering}}
\end{centering}

\end{figure}

We compare the message delay distributions for nodes in \textbf{A} and \textbf{B} regions, respectively. First off, the distributions perfectly overlap if there is no asymmetry between clusters -- that is, if each region contains five nodes. The result is natural since the network topology is symmetrical. As clusters become more asymmetric, the mean delay decreases for nodes in both \textbf{A} and \textbf{B}.  The intuition is that if clusters are more asymmetric, there are relatively fewer routes between any two nodes that include inter-cluster links. That is, messages are less likely to travel back and forth between clusters to reach their destination, leading to shorter delays on average. At the same time, the delay for nodes in $A$ becomes more left-skewed. Even if the \emph{average} delay for a node in \textbf{A} is not necessarily smaller than the average delay for a node in \textbf{B}, there is a higher probability of a very small delay -- probability which increases with cluster asymmetry.

Once the empirical distribution is estimated, the Monte Carlo algorithm proceeds as follows, for each of $n_\text{sim}=1,000$ simulations:
\begin{enumerate}
            \item We draw $n_\text{draws}=10,000$ network delays for each node. That is, for each draw we take $\eta_A$ values from $\hat{F}^A_{\ell,\left(\eta_A,\eta_B\right)}$ and $\eta_B$ values from $\hat{F}^B_{\ell,\left(\eta_A,\eta_B\right)}$. 
            \item We compute the minimum over the $\eta_A$ and $\eta_B$ delays, and label it as $A$ or $B$, depending on its originating cluster.
            \item We compute the probability of cluster $A$ having the smaller delay in a particular simulation, as:
            \begin{equation}
                \hat{p}_{A,k} = \frac{\#\text{ times smallest delay $\in A$}}{n_\text{draws}},
            \end{equation}
            where $k$ runs over simulations.
    \item We average the estimated probabilities $\hat{p}_{A,k}$ over the $n_\text{sim}=1,000$ simulations
        \begin{equation}
            \hat{p}^\text{cluster}_{A}=\frac{\sum_{k}\hat{p}_{A,k}}{n_\text{sim}}.
        \end{equation}
\end{enumerate}

Figure \ref{fig:win_cluster} illustrates the Monte Carlo results at cluster level. Nodes in the larger cluster (i.e., cluster $A$) are on aggregate more likely to have the smallest message delay and ``win'' the transaction if either the cluster size asymmetry is larger or clusters are further apart geographically.

\begin{figure}[h]
\caption{\label{fig:win_cluster} \textbf{Probability of smallest delay (cluster level)}}
\begin{minipage}[t]{1\columnwidth}%
\footnotesize
This figure illustrates the probability, at the cluster level, that \emph{any} node in the given cluster has the smallest delay to post a transaction on the Blockchain.
\end{minipage}
\vspace{0.2in}
\begin{centering}
\includegraphics[width=\columnwidth]{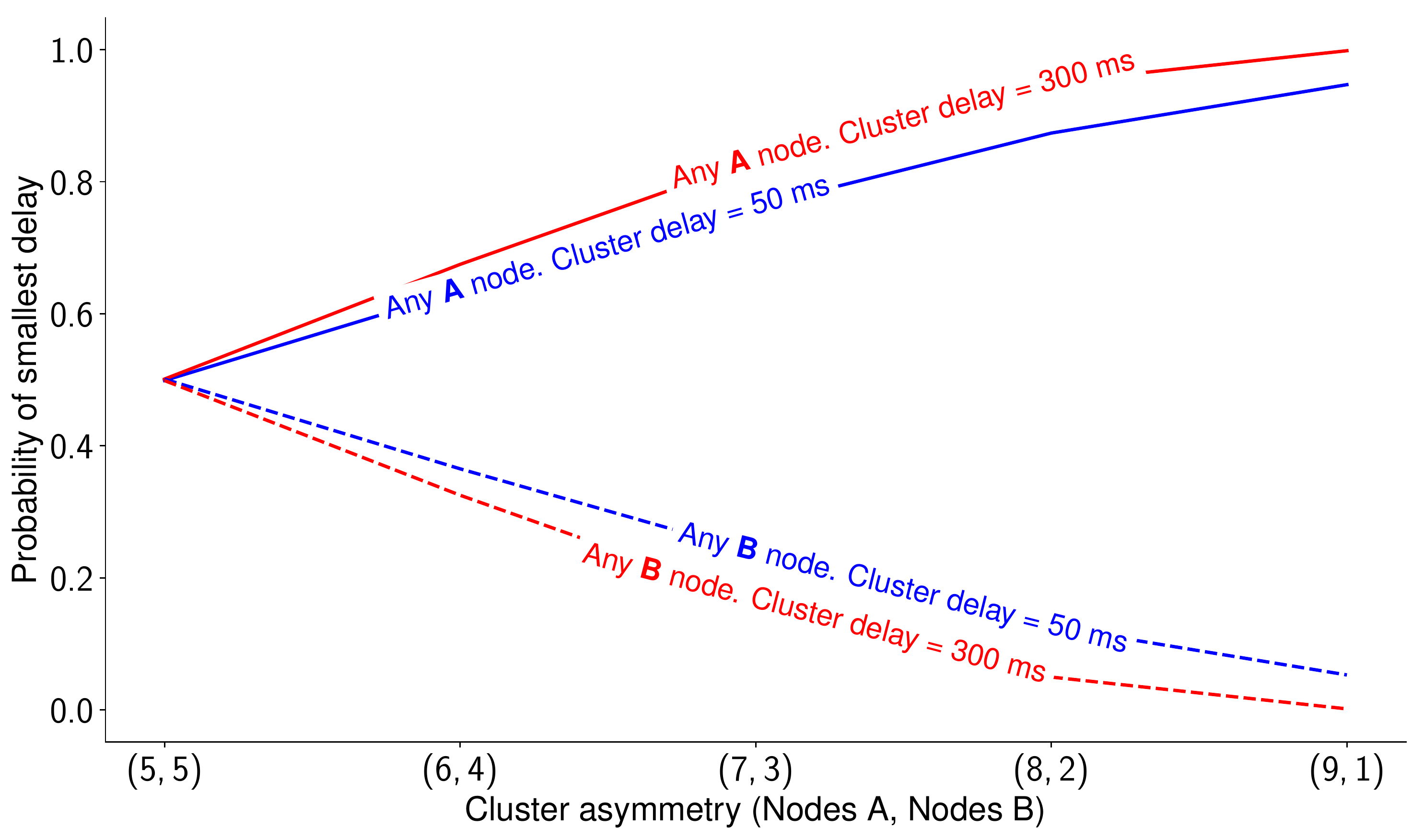}
\par\end{centering}
\end{figure}

However, since the study is focused on individual nodes' incentives to join a particular blockchain, we need to estimate the probability of an \emph{individual} node in a given cluster having the smallest delay, that is $\hat{p}^\text{node}_{j}$ with $j\in\left\{A,B\right\}$. Since nodes in each group are independent, we estimate the individual probabilities as
\begin{equation}
    \hat{p}^\text{node}_{A}=\frac{\hat{p}^\text{cluster}_A}{\eta_A} \text{ and } \hat{p}^\text{node}_{B}=\frac{1-\hat{p}^\text{cluster}_A}{\eta_B}.
\end{equation}

Figure \ref{fig:win_individual} shows that as cluster asymmetry increases, each node is eventually less likely to arrive first. Even if nodes in cluster $A$ are \emph{on aggregate} favored by the asymmetry, the probability of any node in cluster $A$ ``winning'' increases at a slower rate than the number of nodes. We obtain a concave relationship between $\hat{p}^\text{node}_{A}$ and cluster asymmetry: $\hat{p}^\text{node}_{A}$ first increases, then deceases in the level of asymmetry. For nodes in cluster $B$, higher asymmetry unambiguously decreases the probability of having the smallest delay.

\begin{figure}[h]
\caption{\label{fig:win_individual} \textbf{Probability of smallest delay (individual node)}}
\begin{minipage}[t]{1\columnwidth}%
\footnotesize
This figure illustrates the probability, at the cluster level, that \emph{a particular} node in the given cluster has the smallest delay to post a transaction on the Blockchain.
\end{minipage}
\vspace{0.2in}
\begin{centering}
\includegraphics[width=\columnwidth]{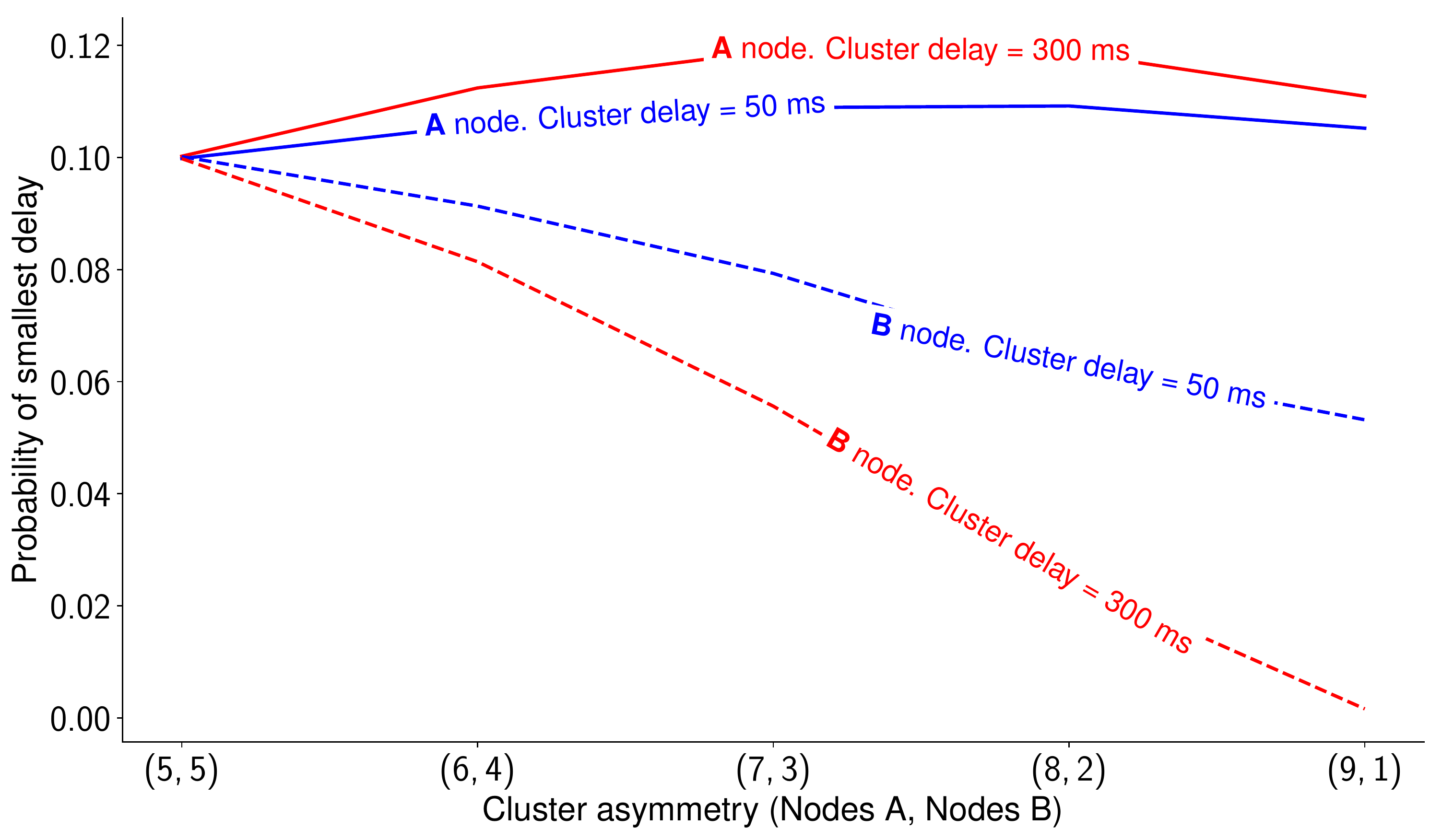}
\par\end{centering}
\end{figure}

Figure \ref{fig:win_lr} illustrates the relative advantage of a node in cluster $A$ -- that is, a measure of $\pi$ from Section \ref{sec:econ}. The relative advantage increases with both cluster distance and cluster asymmetry. Moreover, the two effects reinforce each other: asymmetry favors the larger cluster more when the distance between clusters increases. At its peak (for an inter-cluster delay of 200 ms and a single computer out of ten in cluster $B$), a node in cluster $A$ is 30 times more likely to have the shortest messaging delay and win the transaction.

\begin{figure}[h]
\caption{\label{fig:win_lr} \textbf{Winning likelihood ratio}}
\begin{minipage}[t]{1\columnwidth}%
\footnotesize
This figure illustrates the ratio between the probability of smallest delay for a node in cluster \textbf{A} and a node in cluster \textbf{B}, respectively.
\end{minipage}
\vspace{0.2in}
\begin{centering}
\includegraphics[width=\columnwidth]{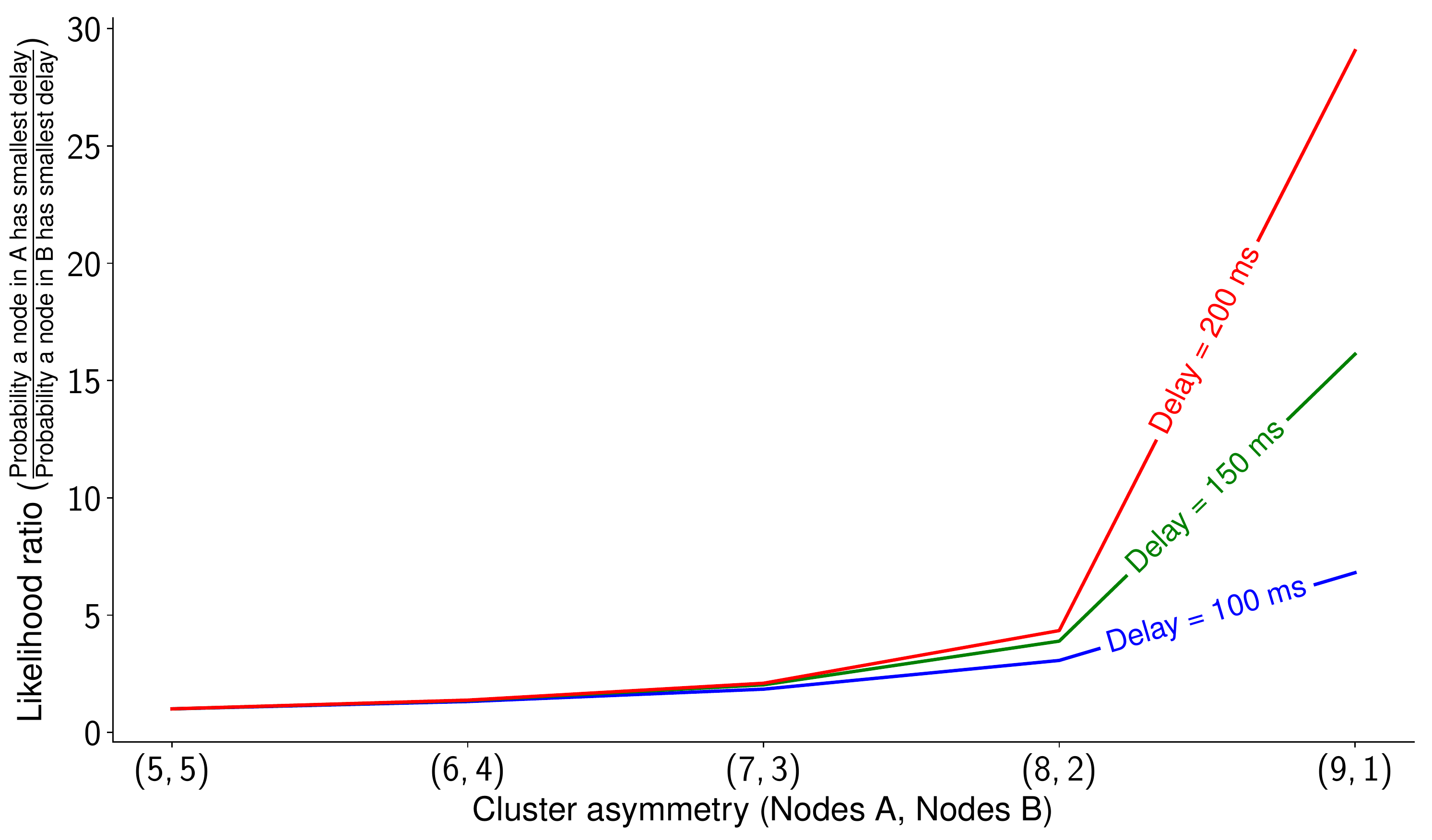}
\par\end{centering}
\end{figure}

Formally, we estimate a linear regression model to capture the relationship between winning probabilities and cluster characteristics:
\begin{equation}
    \textbf{y}= \beta_0+\beta_1 \text{Delay}+\beta_2 \frac{\eta_A}{\eta_B} + \beta_3 \text{Delay} \times  \frac{\eta_A}{\eta_B} +\text{error},
\end{equation}
where the dependent variable $\textbf{y}$ can take four values, $$\textbf{y}\in\left\{\hat{p}_A^\text{cluster},\hat{p}_A^\text{node},\hat{p}_B^\text{node},\dfrac{\hat{p}_A^\text{node}}{\hat{p}_B^\text{node}}\right\},$$ \emph{Delay} stands in for the inter-cluster delay and $\dfrac{\eta_A}{\eta_B}$ is a measure of cluster asymmetry (standardized in the regression to have zero mean and unit standard deviation). Table \ref{tab:tech} presents the results of the regression. In particular, the speed advantage of the large cluster (the ratio $\dfrac{\hat{p}_A^\text{node}}{\hat{p}_B^\text{node}}$ in the last column) increases in both the inter-cluster delay and cluster asymmetry. There is a positive and significant interaction effect between the two. The effects are primarily driven from a sharp decrease in the winning probability of nodes in \textbf{B} (third column) rather than an increase in the winning probability of nodes in \textbf{A} (second column).

\begin{table}
\caption{\label{tab:tech} \textbf{Determinants of ``local'' speed advantage}}
    \centering
\vspace{0.1in}
\begin{tabular}{lllll} 
\toprule
& \multicolumn{4}{c}{Dependent variable ($\textbf{y}$)} \\
& $\hat{p}_A^\text{cluster}$ & $\hat{p}_A^\text{node}$ & $\hat{p}_B^\text{node}$ & $\dfrac{\hat{p}_A^\text{node}}{\hat{p}_B^\text{node}}$ \\
\cmidrule{2-5}
Constant ($\beta_0$) & 77.96*** & 11.06*** & 5.87*** & 7.19*** \\
 & (41.02) & (98.75) & (22.30) & (8.53) \\
Inter-cluster delay ($\beta_1$)  & 1.41 & 0.19* & -0.72*** & 4.84*** \\
 & (0.76) & (1.79) & (-2.75) & (4.50) \\
Cluster asymmetry ($\beta_2$) & 14.68*** & 0.11 & -2.90*** & 10.23***\\
 & (9.44) & (1.187) & (-11.54) & (10.31) \\
Delay $\times$ asymmetry ($\beta_3$)  & 0.27 & 0.02 & -0.50 & 8.73*** \\
 & (0.16) & (0.15) & (-1.51) & (6.37) \\
\cmidrule{1-5}
Model $R^2$ & 66\% & 11\% & 81\% & 90\% \\
\bottomrule
\multicolumn{5}{c}{Asymptotic norrmal z-statistics in parentheses.} \\
\multicolumn{5}{c}{Heteroskedasticity-robust standard errors. *** p$<$0.01, ** p$<$0.05, * p$<$0.1.} \\
\multicolumn{5}{c}{A higher $R^2$ indicates the model explains more of the data variation.} \\
\end{tabular}
\end{table}

\section{Conclusion}
This paper studies the limits of distributed exchange (DEX) infrastructure. One potential advantage of a DEX is to provide a unique marketplace for buyers and sellers and maximize network effects, while at the same time eliminating concerns related to the exchange operator's market power. Can such a distributed exchange operate over geographically distant regions (e.g., different cities or countries where securities are cross-listed)? To answer the question, we both build an economic model of DEX and conduct Monte Carlo simultations of unstructured P2P networks.

We find that asymmetry between \emph{economic activity} levels generates a value transfer from trading infrastructure providers in high-activity regions to providers in low-activity regions. Consequently, miners in high-activity regions only have incentives to join a distributed exchange when then benefit from a significant processing speed advantage over the peer-to-peer network. Through P2P network simulations, we find that the speed advantage increases in the level of \emph{computer infrastructure} asymmetry across regions. 

We conclude that cross-region DEX may be feasible if the asymmetry levels in computer infrastructure and trading interest across regions are correlated, which is a natural assumption since both are driven by economic activity, which tends to be clustered geographically. Our results are relevant for exchange operators, FinTech enterpreneurs, and financial regulators.


\ifCLASSOPTIONcompsoc
  \section*{Acknowledgments}
\else
  \section*{Acknowledgment}
\fi

Marius Zoican gratefully acknowledges the Connaught Foundation for a New Researcher Award.

\appendix

\begin{center}
\begin{tabular}{@{}ll@{}}
\toprule
\multicolumn{2}{c}{\textbf{Model parameters and interpretation}}\\
\cmidrule{1-2}
Parameter & Definition \\
\cmidrule{1-2}
$N$ & Number of miners in each cluster, \textbf{A} and \textbf{B}. \\
$\beta\geq \frac{1}{2}$ & Fraction of trades originating in cluster \textbf{A}.\\
$f$ & Exogenous mining fee. \\
$\pi\leq 1$ & Likelihood ratio: $\dfrac{\mathbb{P}\left(\text{long-distance miner wins }\right)}{\mathbb{P}\left(\text{local miner wins}\right)}$ \\
$\Delta$ & Efficiency gains from a single DEX. \\
$\theta$ & Inter-cluster network latency. \\
\bottomrule
\end{tabular}
\end{center}

\begin{center}
\begin{tabular}{@{}ll@{}}
\toprule
\multicolumn{2}{c}{\textbf{Simulation variables and interpretation.}}\\
\cmidrule{1-2}
Variable & Definition \\
\cmidrule{1-2}
$\eta_i$ & Number of nodes in cluster $i$, $i\in\left\{\textbf{A},\textbf{B}\right\}$  \\
$F^\text{full}_\ell \left(\tau\right)$ & Network-wide distribution of message delays \\ 
$F^{i\in\left\{\textbf{A},\textbf{B}\right\}}_\ell \left(\cdot\right)$ & Cluster-specific distribution of message delays \\ 
$\hat{p}^\text{cluster}_{i\in\left\{\textbf{A},\textbf{B}\right\}}$ & Probability that winning node is in cluster $i$. \\
$\hat{p}^\text{node}_{i\in\left\{\textbf{A},\textbf{B}\right\}}$ & Win probability for a given node in cluster $i$. \\

\bottomrule
\end{tabular}
\end{center}



%

\bibliographystyle{IEEEtran}
\bibliography{ieee}

\end{document}